\newtheorem{lemma}{Lemma}
\newtheorem{theorem}{Theorem}
\newcommand{\ab}{\mathbf{a}}
\newcommand{\bb}{\mathbf{b}}
\newcommand{\xb}{\mathbf{x}}
\newcommand{\yb}{\mathbf{y}}
\newcommand{\zb}{\mathbf{z}}
\newcommand{\zerob}{\mathbf{0}}
\newcommand{\Ab}{\mathbf{A}}
\newcommand{\transpose}{{\mathrm{T}}}
\newcommand{\Fbb}{\mathbb{F}}
\newcommand{\Qbb}{\mathbb{Q}}
\begin{document}

\title{{\bf A Note on Semi-Algebraic Proofs \\
and Gaussian Elimination over Prime Fields}}
\author{Albert Atserias \\ Universitat Polit\`ecnica de Catalunya}

\maketitle

\begin{abstract}
  In this note we show that unsatisfiable systems of linear equations
  with a constant number of variables per equation over prime finite
  fields have polynomial-size constant-degree semi-algebraic proofs of
  unsatisfiability. These are proofs that manipulate polynomial
  inequalities over the reals with variables ranging in $\{0,1\}$.
  This upper bound is to be put in contrast with the known fact that,
  for certain explicit systems of linear equations over the
  two-element field, such refutations require linear degree and
  exponential size if they are restricted to so-called static
  semi-algebraic proofs, and even tree-like semi-algebraic and
  sums-of-squares proofs. Our upper bound is a more or less direct
  translation of an argument due to Grigoriev, Hirsch and Pasechnik
  (Moscow Mathematical Journal, 2002) who did it for a family of
  linear systems of interest in propositional proof complexity. We
  point out that their method is more general and can be thought of as
  simulating Gaussian elimination.
\end{abstract}

\section{Semi-algebraic proofs}


The proof system we consider is inspired by the Sherali-Adams and
Lov\'asz-Schrijver lift-and-project methods for combinatorial
optimization \cite{SheraliAdams1990, LovaszSchrijver1991}, seen as
proof systems for deriving polynomial inequalities (see also
\cite{Pudlak1998, GrigorievHirschPasechnik2002,
  PitassiSegerlind2012}). In addition to the axioms
$$
x_i \geq 0 \;\;\;\;\;\;\;\;\;\;\; 
1 - x_i \geq 0 \;\;\;\;\;\;\;\;\;\;\;  
x_i^2 - x_i \geq 0 \;\;\;\;\;\;\;\;\;\;\;  
x_i - x_i^2 \geq 0 
$$ 
for formal variables $x_1,\ldots,x_n$, it has the following
inference rules:
$$
\frac{P(\xb) \geq 0 \;\;\;\;\;\;\;\;\;\; Q(\xb) \geq 0}{a\cdot P(\xb) +
  b\cdot Q(\xb) \geq 0} \;\;\;\;\;\;\;\;\; \frac{P(\xb) \geq 0}{P(\xb)
  \cdot x_i \geq 0} \;\;\;\;\;\;\;\;\; \frac{P(\xb) \geq 0}{P(\xb)
  \cdot (1-x_i) \geq 0}
$$ where $P(\xb)$ and $Q(\xb)$ are polynomials with rational
coefficients and variables within $\xb = (x_1,\ldots,x_n)$, and $a$
and $b$ are non-negative rational numbers. The first rule is called
\emph{positive linear combination} and the second and third rules are
called \emph{multiplication} or \emph{lifting rules}. It follows from
\cite{LovaszSchrijver1991} that if a system of linear inequalities
over the reals in $n$ variables does not have any solution in
$\{0,1\}^n$, then the trivial contradiction $-1 \geq 0$ can be derived
from the given inequalities, even if all polynomials are restricted to
total degree two. In general, the length of such a proof could be
exponential in a polynomial in $n$, but the shortest such proof is
never worse than that. Here, by length we mean the number of derived
inequalities. This and other complexity measures for proofs are
defined next.


The \emph{degree} of a proof is the maximum of the total degrees of
the polynomials that appear in it. The \emph{length} of a proof is the
number of inferences. The \emph{size} of a proof is the sum of the
sizes of the polynomials that appear in it, where the size of a
polynomial is the sum of the degrees of its monomials.  A proof is
\emph{tree-like} if every derived inequality is used at most once as
the hypothesis of another rule, i.e.~the shape of the proof is a tree,
with the hypotheses and the axioms at the leaves and the conclusion at
the root. A \emph{refutation} is a proof of $-1 \geq 0$. When we write
an inequality $P(\xb) \geq Q(\xb)$ what we really mean is $P(\xb) -
Q(\xb) \geq 0$. Similarly, when we write an equation $P(\xb) = Q(\xb)$
what we really mean is the set of the two inequalities $P(\xb) -
Q(\xb) \geq 0$ and $Q(\xb) - P(\xb) \geq 0$.

\section{Some facts about semi-algebraic proofs}

For every linear form $L(\xb) = \sum_{i=1}^n a_i x_i$ with rational
coefficients and every integer $c$, let $D_c(L(\xb))$ be the quadratic
polynomial $(L(\xb) - c)\cdot (L(\xb) - c + 1)$. In words, the
inequality $D_c(L(\xb)) \geq 0$ states that $L(\xb)$ does not fall in
the open interval $(c-1,c)$. Such statements have short proofs of low
degree:
 
\begin{lemma}[Grigoriev, Hirsch, and Pasechnik
  \cite{GrigorievHirschPasechnik2002}] \label{lem:GHP} For every
  integer $c$ and for every linear form $L(\xb) = \sum_{i=1}^n a_i
  x_i$ with integer coefficients $a_1,\ldots,a_n$, the inequality
  $D_c(L(\xb)) \geq 0$ has a tree-like proof of length polynomial in
  $\max\{|a_i| : i=1,\ldots,n\}$ and $n$, and degree at most~$3$.
\end{lemma}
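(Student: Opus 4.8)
The plan is to prove the statement by induction on the number of variables occurring in $L(\xb)$, handling one variable at each step by means of a single quadratic identity. Write $L(\xb) = M(\xb) + a_k x_k$, where $M(\xb)$ is obtained from $L(\xb)$ by deleting the term $a_k x_k$. Expanding both sides and using $x_k^2 \equiv x_k$, one verifies the polynomial identity
$$
D_c(M(\xb) + a_k x_k) \;=\; D_c(M(\xb))\cdot(1 - x_k) \;+\; D_{c - a_k}(M(\xb))\cdot x_k \;+\; a_k^2\cdot(x_k^2 - x_k);
$$
indeed, substituting $x_k = 0$ and $x_k = 1$ shows that the two sides agree on $\{0,1\}$, and comparing coefficients pins the correction term down to $a_k^2(x_k^2 - x_k)$. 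Hence, once $D_c(M(\xb)) \geq 0$ and $D_{c-a_k}(M(\xb)) \geq 0$ have been derived, the inequality $D_c(L(\xb)) \geq 0$ follows by two applications of the lifting rules (yielding $D_c(M(\xb))(1-x_k) \geq 0$ and $D_{c-a_k}(M(\xb))x_k \geq 0$), one use of the axiom $x_k^2 - x_k \geq 0$, and a pair of positive linear combinations. Because $D_{c'}$ of a linear form is always quadratic and is multiplied only by a single literal, no polynomial of degree greater than $3$ is ever produced: the degree-$3$ parts of the two summands cancel, leaving the quadratic $D_c(L(\xb))$.

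For the base case $L(\xb) = 0$ the polynomial $D_c(0) = c(c-1)$ is a nonnegative integer, so $D_c(0) \geq 0$ is obtained in a constant number of steps as $c(c-1)\cdot x_1 + c(c-1)\cdot(1-x_1)$ from the axioms $x_1 \geq 0$ and $1 - x_1 \geq 0$. Unwinding the recursion, a derivation of $D_{c_0}(L(\xb)) \geq 0$ only ever invokes inequalities of the form $D_c(L') \geq 0$, where $L'$ is a sub-sum of the terms of $L(\xb)$ and $c$ ranges over $\{\,c_0 - \textstyle\sum_{i \in S} a_i : S \subseteq \{1,\dots,n\}\,\}$; since the latter is a set of integers contained in an interval of length at most $\sum_i |a_i| \leq n\cdot\max_i|a_i|$, only polynomially many distinct such inequalities are ever relevant, and each peeling stage contributes only a constant number of inferences per surviving value of $c$.

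The point that needs care — and where I expect the real work to lie — is keeping the proof simultaneously tree-like, of degree at most $3$, and of polynomial length. A naive unwinding of the recursion, in which each $D_c(L') \geq 0$ is re-derived from scratch every time it is consumed, blows the length up to $2^{\Theta(n)}$, because each such inequality feeds two inferences at the next stage; and one cannot simply fall back on the plain $2$-variable Sherali–Adams refutation, since that would not respect the degree bound. The remedy is to peel the variables in a carefully chosen order and to thread the positive linear combinations so that, within each stage, the inequalities for the various values of $c$ are produced and consumed along a single chain, keeping the proof a tree whose size is polynomial in the (polynomially many) inequalities identified above. I expect this bookkeeping, rather than the underlying algebra, to be the main obstacle; once it is carried out, reading off the inferences stage by stage gives the desired tree-like proof of degree $3$ and of length polynomial in $n$ and $\max_i|a_i|$.
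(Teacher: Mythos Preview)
The paper does not give its own proof of this lemma; it is stated with attribution to \cite{GrigorievHirschPasechnik2002} and then used as a black box. So there is nothing in the paper to compare your argument against, and I assess the proposal on its own.

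Your identity
\[
D_c(M+a_kx_k)=D_c(M)(1-x_k)+D_{c-a_k}(M)\,x_k+a_k^2(x_k^2-x_k)
\]
is correct, and the induction on the number of variables is the natural route. It immediately gives a \emph{dag-like} degree-$3$ proof of polynomial length: across the $n$ stages one only ever needs $D_c(L_k)\geq 0$ for $c$ ranging over at most $1+\sum_i|a_i|\leq 1+n\max_i|a_i|$ integers, and each such inequality follows from two inequalities of the previous stage in $O(1)$ inferences. You are also right that this is where the difficulty lies: in a tree-like derivation every $D_c(L_{k-1})\geq 0$ is a premise of \emph{two} stage-$k$ inequalities (it feeds $D_c(L_k)$ through $1-x_k$ and $D_{c+a_k}(L_k)$ through $x_k$), so unfolding the recursion for the single target $D_{c_0}(L(\xb))$ yields a binary tree with $2^n$ leaves, regardless of how few distinct leaves there are.

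The gap is that you do not close this. The proposed remedy---``thread the positive linear combinations so that, within each stage, the inequalities for the various values of $c$ are produced and consumed along a single chain''---is not a construction, and it is not clear it can be one: the positive-linear-combination rule cannot manufacture two usable copies of $D_c(L_{k-1})\geq 0$ from one, and chaining the stage-$k$ conclusions does nothing to reduce the number of stage-$(k-1)$ premises each of them needs. Nor can one collapse the two premises into a single combined inequality without either losing information or raising the degree (expanding the recursion all the way down multiplies each base constant $D_{c'}(0)$ by a product of $n$ literals). As written, your argument establishes the dag-like bound but leaves the tree-like bound---precisely the claim of the lemma---unproved; you explicitly defer the step you yourself call ``the main obstacle.''
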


The next lemma states that polynomial equalities can be freely
substituted. A similar statement appears in
\cite{GrigorievHirschPasechnik2002}[Lemma 5.2]; our statement is
slightly stronger.

\begin{lemma} \label{lem:subs} Let $P(\xb)$, $Q(\xb)$, and $R(\xb,y)$
  be polynomials with variables as indicated, and let $d$ be the
  degree of $y$ in $R(\xb,y)$.  The equation $R(\xb,P(\xb)) =
  R(\xb,Q(\xb))$ has a proof from $P(\xb) = Q(\xb)$ of length bounded
  by a degree-$d$ polynomial in the sizes of $P(\xb)$, $Q(\xb)$ and
  $R(\xb,y)$, and degree at most linear in the degree of $R(\xb,y)$
  and $d$ times the degrees of $P(\xb)$ and $Q(\xb)$.
\end{lemma}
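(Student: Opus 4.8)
The plan is to reduce the statement to the polynomial identity
$$R(\xb,P(\xb)) - R(\xb,Q(\xb)) \;=\; \big(P(\xb)-Q(\xb)\big)\cdot S(\xb)$$
for a suitable polynomial $S(\xb)$, and then to derive the two inequalities $(P-Q)\cdot S \ge 0$ and $(Q-P)\cdot S \ge 0$ from the two inequalities $P - Q \ge 0$ and $Q - P \ge 0$ that the hypothesis $P = Q$ abbreviates; after normalization into monomials, these two conclusions are literally $R(\xb,P) - R(\xb,Q) \ge 0$ and $R(\xb,Q) - R(\xb,P) \ge 0$, i.e.\ the desired equation $R(\xb,P) = R(\xb,Q)$. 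To obtain $S$, write $R(\xb,y) = \sum_{k=0}^{d} R_k(\xb)\,y^k$ with each $R_k$ a polynomial in $\xb$; then $R(\xb,P) - R(\xb,Q) = \sum_{k=1}^{d} R_k(\xb)\,(P^k - Q^k)$, and since $P^k - Q^k = (P-Q)\sum_{i=0}^{k-1} P^{i} Q^{k-1-i}$ one may take $S(\xb) = \sum_{k=1}^{d} R_k(\xb) \sum_{i=0}^{k-1} P^{i} Q^{k-1-i}$.

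The one thing the proof system does not do directly is multiply an inequality by an arbitrary polynomial, as the lifting rules only multiply by $x_i$ or $1-x_i$. So the key step is an auxiliary claim: from an equation $U = V$, i.e.\ from $U - V \ge 0$ and $V - U \ge 0$, one can derive $U\cdot W = V\cdot W$ for every polynomial $W$, with short low-degree proofs. Writing $W = \sum_j c_j m_j$ with the $m_j$ distinct monomials and the $c_j$ rationals, one first derives $(U-V)\cdot m_j \ge 0$ and $(V-U)\cdot m_j \ge 0$ by applying the multiplication rule once for each variable occurrence of $m_j$, starting from $U - V \ge 0$ and from $V - U \ge 0$ respectively; then one forms the positive linear combination $\sum_{c_j > 0} c_j\,(U-V) m_j + \sum_{c_j < 0} |c_j|\,(V-U)m_j$, which as a polynomial equals $(U-V)\cdot W$, and symmetrically $(V-U)\cdot W$. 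The negative coefficients of $W$ are absorbed by reorienting the equation, which is exactly where \emph{both} inequalities of $P = Q$ are needed. Applying this with $U = P$, $V = Q$ and $W = S$ completes the derivation.

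It remains to bound the degree and the length, and this is where the degree-$d$ polynomial blow-up enters. For the degree: $\deg R_k \le \deg R$, each $P^{i} Q^{k-1-i}$ with $i \le k-1 \le d-1$ has degree at most $(d-1)\max\{\deg P, \deg Q\}$, hence $\deg S \le \deg R + (d-1)\max\{\deg P, \deg Q\}$, and every inequality in the derivation — the iterated liftings of $(U-V)m_j$ and their positive combinations — has degree at most $\deg R + d\max\{\deg P, \deg Q\}$, which is linear in $\deg R$ and in $d\max\{\deg P, \deg Q\}$ as required. For the length: the number of monomials of a product of at most $d-1$ factors taken among $P$ and $Q$ is at most $(\mathrm{size}(P)+\mathrm{size}(Q)+2)^{d-1}$, so $S$ has a number of monomials bounded by a polynomial of degree $d+O(1)$ in $\mathrm{size}(P),\mathrm{size}(Q),\mathrm{size}(R)$, using $d \le \deg R \le \mathrm{size}(R)$ to absorb the remaining factors of $d$; and for each monomial of $W = S$ the auxiliary claim uses at most $\deg S$ lifting steps, together with on the order of the number of monomials of $S$ combination steps to assemble each of the two target inequalities, so the overall length is again a polynomial of degree $d + O(1)$ in the three sizes.

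The main obstacle I expect is not a single clever step — both the factorization and the monomial-by-monomial multiplication trick are routine — but the bookkeeping: carrying the size estimate through the powers $P^{i}$ and $Q^{k-1-i}$ cleanly enough to land at a genuine degree-$d$ (rather than exponential) polynomial bound, while keeping the degree bound linear in $\deg R$ and in $d\max\{\deg P, \deg Q\}$. A little attention is also due to degenerate cases — the constant monomial $1$ of $S$, for which no lifting step is needed, and the case $P - Q \equiv 0$, for which the target equation is immediate — though neither affects the stated bounds.
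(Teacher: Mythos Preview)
Your argument is correct but organized differently from the paper's. The paper first handles the case where $R(\xb,y)$ is linear in $y$ --- for each monomial $y\cdot M(\xb)$ of $R$, derive $P\cdot M = Q\cdot M$ by iterated lifting, then add --- and reduces the general degree-$d$ case to $d$ applications of the linear case via the multilinearization $R'(\xb,y_1,\ldots,y_d)$ obtained by replacing each $y^s$ with $y_1\cdots y_s$: one substitutes $P$ (respectively $Q$) for the $y_j$'s one at a time and chains the resulting equalities. You instead do everything in one pass via the factorization $R(\xb,P)-R(\xb,Q)=(P-Q)\cdot S$ with $S=\sum_k R_k\sum_i P^iQ^{k-1-i}$, together with your auxiliary claim that from an equation $U=V$ one may derive $U\cdot W=V\cdot W$ for any $W$ by lifting monomial-by-monomial and absorbing negative coefficients through the reverse inequality.

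Both routes rest on the same elementary device --- multiplying $P-Q\ge 0$ and $Q-P\ge 0$ by monomials and recombining --- so the difference is structural. The paper's iteration makes the length bound fall out mechanically: each linear substitution at most multiplies the size of the current polynomial by $\max(\mathrm{size}(P),\mathrm{size}(Q))$, and the linear-case proof is linear in that size, so $d$ iterations give a genuinely degree-$d$ polynomial bound. Your one-shot factorization makes the \emph{degree} bound on the proof most transparent (it is visibly $\deg R + d\max(\deg P,\deg Q)$), but the length analysis, as you yourself note, lands at degree $d+O(1)$ rather than exactly $d$ in the sizes. For the applications in the paper this discrepancy is immaterial.
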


\begin{proof}
  It suffices to prove the statement when $R(\xb,y)$ is linear in $y$;
  the general statement in which $y$ has degree $d \geq 1$ in
  $R(\xb,y)$ follows from iterating the lemma on the polynomial
  $R'(\xb,y_1,\ldots,y_d)$ obtained from $R(\xb,y)$ by replacing each
  $y^s$ by $\prod_{i=1}^s y_i$. Write every monomial in $R(\xb,y)$ in
  the form $y \cdot M(\xb)$, where $M(\xb)$ is a monomial without
  $y$. The equality $P(\xb) \cdot M(\xb) = Q(\xb) \cdot M(\xb)$
  follows at once from $P(\xb) = Q(\xb)$ by the multiplication rule.
  Adding up over all monomials of $R(\xb,y)$ we get the result.  To
  see the bound on the size note that, in case $R(\xb,y)$ is linear in
  $y$, the size of $R(\xb,P(\xb))$ is bounded by the product of the
  sizes of $R(\xb,y)$ and $P(\xb)$, and similarly for $R(\xb,Q(\xb))$.
  Iterating $d$ times to handle the general $d \geq 1$ case we get the
  degree-$d$ polynomial bound on the length of the proof.
\end{proof}

\section{Two-element field}

We identify the elements of the two-element field $\Fbb_2$ with
$\{0,1\}$.  Let $\xb = (x_1,\ldots,x_n)$ be formal variables ranging
over $\Fbb_2$ or $\Qbb$, depending on the context.  For every linear
equation of the form $\ab^\transpose \xb = b$, where $\ab \in
\Fbb_2^n$ and $b \in \Fbb_2$, let $\mathcal{S}(\ab,b)$ be the system of
linear inequalities
\begin{align*}
  & \sum_{i \in T} (1-x_i) + \sum_{i\in I\setminus T} x_i \geq 1 &
  \text{ for all } T \subseteq I \text{ such that } |T| \equiv
  1-b\!\!\mod 2,
\end{align*}
where $I = \mathrm{supp}(\ab) := \{ i \in [n] : a_i \not= 0 \}$ and
$[n] := \{1,\ldots,n\}$.  Note that $\mathcal{S}(\ab,b)$ has exactly
$2^{|I|-1}$ inequalities, and that it is satisfied in $\Qbb$ by a
$\{0,1\}$-assignment to the $\xb$-variables if and only if
$\ab^\transpose \xb = b$ is satisfied in $\Fbb_2$ by the same
assignment.  For a system of $m$ linear equations $\Ab \xb = \bb$ as
above, let $\mathcal{S}(\Ab, \bb) := \bigcup_{i=1}^m
\mathcal{S}(\ab_i,b_i)$ as $\ab_i$ ranges over the rows of $\Ab$ and
$b_i$ ranges over the components of $\bb$. Note that this system has
at most $m2^w$ inequalities, where $w$ is the maximum number of
non-zero components in the rows of $\Ab$.

For an equation of the form $\ab^\transpose \xb = b$ as above, an
alternative way of writing the system $\mathcal{S}(\ab,b)$ is by
imposing the system of polynomial equalities
\begin{align*}
  & \prod_{i \in T} x_i \cdot \prod_{i \in I\setminus T} (1-x_i) = 0 &
  \text{ for all } T \subseteq I \text{ such that } |T| \equiv 1-b\!\!\mod 2.
\end{align*}
In the following, for $I \subseteq [n]$ and $T \subseteq I$, let
$M^I_T(\xb) := \prod_{i\in T} x_i \prod_{i \in I\setminus T} (1-x_i)$.
Such polynomials are called \emph{extended monomials}.
We start by noting that
\begin{equation}
  \sum_{T \subseteq I} M^I_T(\xb) = \prod_{i \in I}(x_i + 1 - x_i) = 1.
\label{eq:fullsum}
\end{equation}

We continue relating the two forms of expressing the equation
$\ab^\transpose \xb = b$:

\begin{lemma}[Grigoriev, Hirsch, and Pasechnik
  \cite{GrigorievHirschPasechnik2002}] \label{lem:GHP2} Let $\ab \in
  \{0,1\}^n$ and $b \in \{0,1\}$, and let $I =
  \mathrm{supp}(\ab)$. For every $T \subseteq I$ such that $|T| \equiv
  1-b\!\!\mod 2$, the equation $M^I_T(\xb) = 0$ has a tree-like proof
  from $S(\ab,b)$ of length linear in $|I|$, and degree at most $|I|$.
\end{lemma}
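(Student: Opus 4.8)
We want to derive $M^I_T(\xb) = 0$ (i.e. both inequalities $M^I_T(\xb) \geq 0$ and $-M^I_T(\xb) \geq 0$) from the system $S(\ab,b)$, which consists of the inequalities $\sum_{i \in U} (1-x_i) + \sum_{i \in I \setminus U} x_i \geq 1$ for all $U \subseteq I$ with $|U| \equiv 1-b \pmod 2$. Let me think about what's available and what the target looks like.

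Let me focus on a single $T$. We have the inequality indexed by $U = T$ in the system (since $|T| \equiv 1-b \pmod 2$):
$$\sum_{i \in T} (1-x_i) + \sum_{i \in I \setminus T} x_i \geq 1.$$
Now here's the key observation: the polynomial $M^I_T(\xb) = \prod_{i \in T} x_i \prod_{i \in I \setminus T}(1-x_i)$ is a product of terms each of which is "complementary" to a term in that sum. When $M^I_T(\xb) = 1$, each $x_i = 1$ for $i \in T$ and each $x_i = 0$ for $i \in I \setminus T$, so the left side of the inequality is $0$, contradicting $\geq 1$. So morally we want: $M^I_T(\xb) \geq 0$ is easy (it's a product of axiom-type terms), and $M^I_T(\xb) \leq 0$ should follow by multiplying the inequality $\sum_{i \in T}(1-x_i) + \sum_{i \in I \setminus T} x_i \geq 1$ by the extended monomial... but wait, multiplication is only by single variables $x_i$ or $1-x_i$.

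**The plan.** First, $M^I_T(\xb) \geq 0$: starting from an axiom $x_i \geq 0$ (for some $i \in T$, or $1 \geq 0$ derived trivially if... actually start from $1-x_j \geq 0$ or $x_j \geq 0$), apply the multiplication rules repeatedly, multiplying by $x_i$ for each $i \in T$ and by $1-x_i$ for each $i \in I \setminus T$; this is $|I|$ steps and degree $|I|$, and gives $M^I_T(\xb) \geq 0$. Second, for $-M^I_T(\xb) \geq 0$: take the hypothesis inequality $H: \sum_{i \in T}(1-x_i) + \sum_{i \in I \setminus T} x_i - 1 \geq 0$ from $S(\ab,b)$. We want to multiply it by the "extended monomial" $M^I_T$, but since we can only multiply by $x_i$ and $1-x_i$ one at a time, we apply the multiplication rule $|I|$ times: multiply $H$ by $x_i$ for $i \in T$ and by $(1-x_i)$ for $i \in I \setminus T$, in some order. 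The result is $M^I_T(\xb) \cdot \big(\sum_{i \in T}(1-x_i) + \sum_{i \in I \setminus T} x_i - 1\big) \geq 0$. Now the crucial simplification: for each $j \in T$, the term $M^I_T(\xb)\cdot(1-x_j)$ contains the factor $x_j (1-x_j)$; and for each $j \in I \setminus T$, the term $M^I_T(\xb) \cdot x_j$ contains $x_j(1-x_j)$. Each such product $x_j(1-x_j) = x_j - x_j^2$ is provably $\geq 0$ and $\leq 0$ from the boolean axioms $x_j^2 - x_j \geq 0$ and $x_j - x_j^2 \geq 0$ — i.e. $x_j(1-x_j) = 0$ is derivable, and by Lemma~\ref{lem:subs} we may substitute $0$ for it inside any polynomial context. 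After these substitutions, every term in $M^I_T(\xb) \cdot \sum(\cdots)$ vanishes, leaving $M^I_T(\xb) \cdot (-1) \geq 0$, i.e. $-M^I_T(\xb) \geq 0$, as desired.

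**Degree and length bookkeeping, and the main obstacle.** The degree of every polynomial appearing is at most $|I|+1$ before simplification (extended monomial of degree $|I|$ times a linear form), which after collecting is bounded by... we need to be careful to get the claimed bound $|I|$; note that each term $M^I_T(\xb)\cdot(1-x_j)$ with $j \in T$ actually equals $M^I_{T\setminus\{j\}}(\xb) \cdot x_j(1-x_j)$, and once we use $x_j(1-x_j) = 0$ this term is replaced by $0$ before its degree matters, so the intermediate degree spikes to $|I|+1$ only momentarily; if that is not allowed we instead argue term-by-term, first deriving $M^I_T(\xb)\cdot(1-x_j) = 0$ and $M^I_T(\xb)\cdot x_j = 0$ directly (each via Lemma~\ref{lem:subs} applied to $x_j(1-x_j)=0$, in degree $\leq |I|$), then adding. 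The length is $O(|I|)$ inferences for the positive direction, $O(|I|)$ multiplications plus $O(|I|)$ applications of Lemma~\ref{lem:subs} / additions for the negative direction, and each invocation of Lemma~\ref{lem:subs} costs only a constant number of inferences here since the context polynomials have size $O(|I|)$ and $y$ has degree $1$ — so the total is linear in $|I|$. The proof is tree-like because each derived inequality (each partial product, each substitution output) feeds into exactly one subsequent step. The main thing to get right is the degree accounting in the negative direction: one must ensure that the cancellation of $x_j(1-x_j)$ happens "early enough" that no polynomial of degree exceeding $|I|$ is ever written down, which is why I would phrase it as proving $M^I_T(\xb)\cdot(1-x_j) = 0$ (resp. $M^I_T(\xb)\cdot x_j = 0$) as separate lemmas-in-passing rather than expanding the full product first.
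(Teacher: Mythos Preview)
Your approach is essentially the same as the paper's: derive $M^I_T \geq 0$ by iterated multiplication from the variable axioms, and derive $-M^I_T \geq 0$ by taking the single hypothesis inequality indexed by $T$, multiplying through by each $x_i$ ($i \in T$) and each $(1-x_i)$ ($i \in I\setminus T$), and using $x_j - x_j^2 = 0$ to kill the resulting cross terms. The only difference is ordering: the paper interleaves each multiplication with its cancellation (multiply by $x_1$, immediately cancel the $(1-x_1)x_1$ term, multiply by $x_2$, cancel, \ldots), whereas you first perform all $|I|$ multiplications and then do all the cancellations; the interleaved version keeps the intermediate polynomials in the tidy form $\big(\sum_{i>k}\text{term}_i\big)\cdot\prod_{j\le k}(\cdots) \ge \prod_{j\le k}(\cdots)$ but is otherwise the same argument. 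Your concern about the degree momentarily reaching $|I|+1$ is legitimate and in fact applies equally to the paper's interleaved proof (the step just after a multiplication and just before its cancellation has degree one higher); your proposed fix via Lemma~\ref{lem:subs} does not actually avoid this either, since $x_j(1-x_j)$ has degree~$2$ before it is multiplied into the degree-$(|I|-1)$ context.
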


\begin{proof} Let $t = |T|$ and assume without loss of generality that
  $T = \{1,\ldots,t\}$ and $I\setminus T = \{t+1,\ldots,s\}$. First
  multiply $\sum_{i=1}^t (1-x_i) + \sum_{i=t+1}^s x_i \geq 1$ by
  $x_1$. Then use $x_1 - x_1^2 = 0$ to get rid of the term $(1-x_1)
  \cdot x_1$ on the left-hand side. Repeat for $x_2,\ldots,x_t$ to get
  $\sum_{i = t+1}^s x_i\cdot \prod_{j = 1}^t x_j \geq \prod_{j = 1}^t
  x_j$.  From here, first multiply by $1-x_{t+1}$ and then use
  $x_{t+1} - x_{t+1}^2 = 0$ to get rid of $(1-x_{t+1})\cdot x_{t+1}
  \cdot \prod_{j=1}^t(1-x_j)$ on the left-hand side.  Repeat for
  $x_{t+2},\ldots,x_{s}$ to get $0 \geq \prod_{j = 1}^t x_i \cdot
  \prod_{j = t+1}^s (1-x_j)$.  The converse inequality has a direct
  proof not even using any of the axioms in $\mathcal{S}(\ab,b)$.
\end{proof}

The last lemma we need also refers to extended monomials:

\begin{lemma} \label{lem:thing}
  Let $T \subseteq I \subseteq [n]$. Then the equation 
$\big({\sum_{i \in I} x_i - |T|}\big) \cdot M^I_T = 0$ 
has a tree-like proof of length linear in $|I|$, and degree at most
$|I|+1$.
\end{lemma}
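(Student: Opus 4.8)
The plan is to exploit the structure of the extended monomial $M^I_T$, which "pins down" the value of each variable in $I$: on the support of $M^I_T$, every $x_i$ with $i\in T$ equals $1$ and every $x_i$ with $i\in I\setminus T$ equals $0$, so $\sum_{i\in I}x_i$ should equal $|T|$. Concretely, I would first establish, for each single index $i\in T$, the equation $x_i\cdot M^I_T = M^I_T$, and for each $i\in I\setminus T$, the equation $x_i\cdot M^I_T = 0$. Both follow in $O(1)$ inferences from the Boolean axioms $x_i^2 - x_i = 0$ (equivalently $x_i - x_i^2 = 0$) together with the multiplication rule: for $i\in T$, write $M^I_T = x_i\cdot M'$ where $M'$ is the extended monomial with $i$ removed, multiply $x_i - x_i^2 = 0$ by $M'$ to get $x_i M' - x_i^2 M' = 0$, i.e.\ $M^I_T - x_i M^I_T = 0$; for $i\in I\setminus T$, write $M^I_T = (1-x_i)\cdot M''$, multiply $x_i - x_i^2 = 0$ by $M''$ to get $x_i M'' = x_i^2 M'' = x_i(1-(1-x_i))M''$\ldots more directly, $x_i(1-x_i) = 0$ multiplied by $M''$ gives $x_i M^I_T = 0$. (One may also simply invoke Lemma~\ref{lem:subs} to substitute the Boolean identities, but the direct argument is cleaner here.)

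Next I would sum these $|I|$ equations. Adding $x_i M^I_T = M^I_T$ over all $i\in T$ and $x_i M^I_T = 0$ over all $i\in I\setminus T$, using positive linear combinations in both directions to handle the equalities, yields
\[
\Big(\sum_{i\in I} x_i\Big)\cdot M^I_T \;=\; |T|\cdot M^I_T,
\]
which is exactly $\big(\sum_{i\in I}x_i - |T|\big)\cdot M^I_T = 0$ after rearranging. Each of the $|I|$ building-block equations is derived by a bounded-length tree-like subproof, and the final summation is a single combination step applied to these, so the whole proof is tree-like of length linear in $|I|$. For the degree bound: $M^I_T$ has degree $|I|$, multiplying by a single $x_i$ gives degree $|I|+1$, and the Boolean axioms multiplied by the appropriate sub-monomial have degree at most $|I|+1$ as well, so the degree never exceeds $|I|+1$.

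The main thing to be careful about is bookkeeping rather than any real obstacle: I must make sure the "subtraction" directions (deriving $-P \geq 0$ from $P = 0$, and combining an equality stated as a pair of inequalities) stay within the tree-like discipline and do not blow up the length beyond linear in $|I|$, and that the intermediate polynomials are kept in the "$x_i\cdot M^I_T$" form so that no product of degree exceeding $|I|+1$ is ever formed (in particular, one should not expand $M^I_T$ against $\sum x_i$ prematurely). None of this is deep, so I expect the proof to be short and essentially a direct calculation once the single-index identities are in hand.
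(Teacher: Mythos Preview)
Your proposal is correct and follows essentially the same argument as the paper: derive $x_i\cdot M^I_T = M^I_T$ for $i\in T$ and $x_i\cdot M^I_T = 0$ for $i\in I\setminus T$ from the Boolean axioms via multiplication, then add these up to obtain $\sum_{i\in I} x_i\cdot M^I_T = |T|\cdot M^I_T$. One small bookkeeping point: each single-index identity is not literally $O(1)$ inferences, since multiplying the degree-$2$ Boolean axiom by the degree-$(|I|-1)$ sub-monomial $M'$ (or $M''$) requires one lifting step per factor; the paper is equally terse on this, and it does not affect the degree bound or the overall shape of the argument.
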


\begin{proof} Write $M$ for $M^I_T$. For every $i \in I\setminus T$,
  using $x_i \cdot (1-x_i) = x_i - x_i^2 = 0$ we get $x_i \cdot M =
  0$.  For every $i \in T$, using $x_i^2 - x_i = 0$ we get $x_i \cdot
  M = M$.  Adding up we get $\sum_{i \in I} x_i \cdot M = |T|\cdot M$.
\end{proof}

\begin{theorem} \label{thm:maintwoelementfield}
  Let $\Ab \in \{0,1\}^{m \times n}$ and $\bb \in \{0,1\}^m$.  If $\Ab
  \xb = \bb$ is unsatisfiable in $\Fbb_2$, then $S(\Ab,\bb)$ has a
  (not necessarily tree-like) refutation of size polynomial in $n$ and
  $2^w$, and degree linear in $w$, where $w$ is the maximum number of
  non-zero components in any of the rows of $\Ab$.
\end{theorem}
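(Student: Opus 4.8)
The plan is to simulate Gaussian elimination over $\Fbb_2$ inside the semi-algebraic proof system, using extended monomials as the carrier of the linear-algebra information. Since $\Ab\xb=\bb$ is unsatisfiable over $\Fbb_2$, some $\Fbb_2$-linear combination of the rows yields the equation $0 = 1$, i.e.\ there is a set $J \subseteq [m]$ of rows with $\sum_{i\in J}\ab_i = \zerob$ in $\Fbb_2^n$ but $\sum_{i\in J} b_i = 1$ in $\Fbb_2$. Fix such a $J$ and let $\ell = |J|$. I would process the rows of $J$ one at a time, maintaining after step $k$ a single linear equation over the integers that, modulo $2$, equals the $\Fbb_2$-sum of the first $k$ chosen rows; concretely I would keep a set $I_k \subseteq [n]$ (the current support, the symmetric difference of the supports of the first $k$ rows) and derive, for every $T \subseteq I_k$ with the wrong parity, the equation $M^{I_k}_T(\xb)=0$. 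The point is that the extended monomials over a fixed index set behave like an orthonormal basis: identity~\eqref{eq:fullsum} lets one expand $1$ as $\sum_{T\subseteq I} M^I_T$, and Lemma~\ref{lem:thing} lets one replace $\sum_{i\in I} x_i$ by $|T|$ on the support of $M^I_T$, so that a linear equation $\sum_{i\in I} x_i = b \pmod 2$ gets encoded exactly as ``$M^I_T = 0$ for all $T$ of parity $1-b$.''

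First I would use Lemma~\ref{lem:GHP2} to convert each starting inequality system $\mathcal{S}(\ab_i,b_i)$, for $i\in J$, into the polynomial equalities $M^{I}_T(\xb)=0$ (with $I=\mathrm{supp}(\ab_i)$, $|T|\equiv 1-b_i \bmod 2$), at length linear in $|I|\le w$ and degree at most $w$. The heart of the argument is the elimination step: given the equations encoding the partial sum over support $I_{k-1}$ and the equations encoding row $i_k$ over support $\mathrm{supp}(\ab_{i_k})$, derive the equations encoding the new partial sum over $I_k = I_{k-1}\,\triangle\,\mathrm{supp}(\ab_{i_k})$. To do this I would (i) lift both families of equalities onto the common index set $I_{k-1}\cup\mathrm{supp}(\ab_{i_k})$ by repeatedly multiplying by $x_j$ and $(1-x_j)$ and splitting via $M^{I}_T = x_j M^{I\cup\{j\}}_{T\cup\{j\}} + (1-x_j)M^{I\cup\{j\}}_{T}$ — a routine induction whose cost is a factor $2^w$ in size; (ii) combine the two lifted parity constraints; here the key identity is that on the union index set $U$, an extended monomial $M^U_T$ vanishes if $|T\cap I_{k-1}|$ has parity $1-b_{k-1}$ or $|T\cap \mathrm{supp}(\ab_{i_k})|$ has parity $1-b_{i_k}$, whereas the surviving $M^U_T$ all have $|T\cap I_k|$ of a fixed parity, because $|T\cap I_k| = |T\cap I_{k-1}| + |T\cap\mathrm{supp}(\ab_{i_k})| - 2|T\cap I_{k-1}\cap\mathrm{supp}(\ab_{i_k})|$; (iii) contract $U$ back down to $I_k$ by summing out the coordinates in $I_{k-1}\cap\mathrm{supp}(\ab_{i_k})$ using $\sum_{T'\subseteq S} M^{U}_{T\cup T'} = M^{U\setminus S}_T$ (again an instance of \eqref{eq:fullsum} applied within $M$, provable by Lemma~\ref{lem:subs} or directly). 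After processing all of $J$ we reach $I_\ell = \emptyset$ with the wrong parity $1-(\sum_{i\in J}b_i) = 1-1 = 0$, so the ``equation'' we have derived is $M^{\emptyset}_{\emptyset} = 1 = 0$, i.e.\ $1 \le 0$, which is the refutation $-1\ge 0$ up to the positive-linear-combination rule.

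For the bookkeeping: every index set that ever appears is contained in the union of at most two row-supports, hence has size at most $2w$; every polynomial that appears is (a product of at most $2w$ linear factors times a bounded-degree residue from the $x_i-x_i^2=0$ cleanups), so the degree is $O(w)$; the number of equalities manipulated at each of the $\ell\le m$ elimination steps is $O(2^{2w})$ extended monomials, each derived by an inference chain of length $\mathrm{poly}(w)$ via Lemmas~\ref{lem:GHP2}, \ref{lem:thing}, \ref{lem:subs} and direct manipulations, and the lifting/contraction between index sets costs an additional $\mathrm{poly}(n)\cdot 2^{O(w)}$ factor; so the total size is polynomial in $n$ and $2^w$ as claimed. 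I expect the main obstacle to be item (ii)–(iii) above — carrying out the ``add two parity constraints and contract the shared support'' step cleanly within the rules, keeping exact track of which extended monomials survive and making the parity computation go through — together with verifying that the lifting/splitting identities $M^{I}_T = x_j M^{I\cup\{j\}}_{T\cup\{j\}} + (1-x_j)M^{I\cup\{j\}}_{T}$ are derivable at constant incremental degree (they are, directly) so that the degree stays linear in $w$ rather than blowing up with the number of elimination steps.
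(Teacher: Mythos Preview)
Your approach has a genuine gap in the size and degree accounting. You claim that ``every index set that ever appears is contained in the union of at most two row-supports, hence has size at most $2w$,'' but this is false for the running supports $I_k$ in your own construction: $I_k$ is the symmetric difference of the supports of the first $k$ rows of $J$, and nothing prevents $|I_k|$ from growing to $\Theta(n)$. Concretely, take the Tseitin system on a bounded-degree expander: here $w=O(1)$, every variable appears in exactly two rows, the whole row set $J=[m]$ satisfies $\sum_j \ab_j=\zerob$, and the symmetric difference of the supports over any prefix $S\subseteq J$ is the edge boundary of $S$, which by expansion is $\Omega(n)$ for $|S|$ near $m/2$ in \emph{every} ordering. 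Your invariant ``$M^{I_k}_T(\xb)=0$ for all $T$ of wrong parity'' then involves $2^{|I_k|-1}=2^{\Omega(n)}$ equations of degree $|I_k|=\Omega(n)$, so both the size and the degree bounds collapse.

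The paper sidesteps exactly this blow-up by choosing a different inductive invariant. Instead of carrying the exponential family $\{M^{I_k}_T=0\}$, it carries the \emph{single} quadratic inequality $D_c(L_k(\xb))\ge 0$ for each integer $c$ in a polynomial range, where $L_k(\xb)=\tfrac12\big(\sum_{j\le k}\ab_j^\transpose\xb+\sum_{j>k}b_j\big)$; these polynomials have degree~$2$ no matter how large the support of $L_k$ is. The extended monomials $M^I_T$ enter only in the inductive step, and there $I$ is the support of the \emph{single} row being absorbed, so $|I|\le w$. The base case $k=|J|$ uses Lemma~\ref{lem:GHP} (integrality of $L_{|J|}$), and at $k=0$ one gets $L_0=q+\tfrac12$ constant, whence $D_{q+1}(L_0)=-\tfrac14$. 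So the missing idea is precisely to encode the partial sum not via its full extended-monomial expansion but via the low-degree ``$L_k$ avoids the open interval $(c-1,c)$'' predicates.
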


\begin{proof}
  Let $\ab_1,\ldots,\ab_m$ be the rows of $\Ab$, with $\ab_j =
  (a_{j,1},\ldots,a_{j,n})$.  Assume $\Ab \xb = \bb$ is unsatisfiable
  in $\Fbb_2$.  Then the $\Fbb_2$-rank of the matrix $[\Ab \;|\; \bb]$
  is bigger than the rank of $\Ab$. This means that there exists a
  subset of rows $J$ such that $|J| \leq n$ and $\sum_{j \in J} \ab_j
  = \zerob$ and $\sum_{j \in J} b_j = 1$ with arithmetic in
  $\Fbb_2$. In order to simplify notation, we assume without loss of
  generality that $J = \{1,\ldots,|J|\}$.

  For every $k \in \{0,\ldots,|J|\}$, let
  $$
  L_k(\xb) := \frac{1}{2} \left({ \sum_{j=1}^k \sum_{i=1}^n a_{j,i} x_i
  + \sum_{j=k+1}^{|J|} b_j}\right).
  $$
  We provide proofs of $D_c(L_k(\xb)) \geq 0$ for every $c \in R_k :=
  \{0,\ldots,(k+1)\cdot n \}$ by reverse induction on
  $k \in \{0,\ldots,|J|\}$.

  The base case $k = |J|$ is a special case of Lemma~\ref{lem:GHP}. To
  see why note that the condition $\sum_{j \in J} \ab_j = \zerob$
  means that if arithmetic is in $\Qbb$ then $\sum_{j \in J} a_{j,i}$
  is an even natural number for every $i \in [n]$. But then all the
  coefficients of
  $$
  L_{|J|}(\xb) = \frac{1}{2} \sum_{j=1}^{|J|} \sum_{i=1}^n a_{j,i} x_i
  = \sum_{i=1}^n \left({\frac{1}{2} \sum_{j=1}^{|J|} a_{j,i}}\right)
  x_i
  $$ 
  are integers. Hence Lemma~\ref{lem:GHP} applies.

  Suppose now that $0 \leq k \leq |J|-1$ and that we have a proof of
  $D_d(L_{k+1}(\xb)) \geq 0$ available for every $d \in R_{k+1}$.  Fix
  $c \in R_k$; our immediate goal is to give a proof of $D_c(L_k(\xb))
  \geq 0$.  As $k$ is fixed, write $L(\xb)$ instead of $L_{k+1}(\xb)$,
  and let the equation $\ab_{k+1}^\transpose \xb = b_{k+1}$ be written
  as $\sum_{i \in I} x_i = b$, where $I = \mathrm{supp}(\ab)$. Note
  that $L(\xb) = L_k(\xb) + \frac{1}{2}\cdot \ell(\xb)$ where
  $\ell(\xb) := -b + \sum_{i\in I} x_i$. Fix $T \subseteq I$ such that
  $|T| \equiv b\!\!\mod 2$, and let $d = c +\frac{t-b}{2}$ where $t =
  |T|$. Note that $d \in R_{k+1}$ as $c \in R_k$ and $0 \leq t \leq n$
  and $0 \leq b \leq 1$ are such that $t - b$ is even. Multiplying
  $D_{d}(L(\xb)) \geq 0$ by $M^I_T(\xb)$ we get
  \begin{equation}
    (L(\xb) - d) \cdot (L(\xb) - d + 1) \cdot M^I_T(\xb) \geq 0.
  \end{equation}
  Replacing $L(\xb) = L_k(\xb) + \frac{1}{2}\cdot \ell(\xb)$ in the factor
  $(L(\xb) - d)$ and recalling $d = c +\frac{t-b}{2}$, this
  inequality can be written as
  \begin{equation}
    (L_k(\xb) - c) \cdot (L(\xb) - d + 1) \cdot M^I_T(\xb)
    + (L(\xb) - d + 1) \cdot \textstyle{\frac{1}{2}} \cdot A(\xb) \geq 0
  \label{eq:inter1}
  \end{equation}
  where $A(\xb) := (\ell(\xb) + b - t) \cdot M^I_T(\xb)$. By
  Lemma~\ref{lem:thing} we have a proof of $A(\xb) = 0$, and hence of
  $(L(\xb)-d+1) \cdot \frac{1}{2}\cdot A(\xb) = 0$. Composing
  with~\eqref{eq:inter1} we get a proof of
  \begin{equation}
    (L_k(\xb) - c) \cdot (L(\xb) - d + 1) \cdot M^I_T(\xb) \geq 0.
  \end{equation}
  The same argument applied to the factor $(L(\xb)  - d + 1)$ of
  this inequality gives
  \begin{equation}
  (L_k(\xb) - c) \cdot (L_k(\xb) - c + 1) \cdot M^I_T(\xb) \geq 0.
  \end{equation}
  This is precisely $D_c(L_k(\xb)) \cdot M^I_T(\xb) \geq 0$. Adding
  up over all $T \subseteq I$ with $|T| \equiv b\!\!\mod 2$
  we get
  \begin{equation}
  D_c(L_k(\xb)) \cdot \sum_{T \subseteq I \atop |T| \equiv b}
  M^I_T(\xb) \geq 0. \label{eq:last1}
  \end{equation}
  By Lemma~\ref{lem:GHP2}, from the inequalities for $\sum_{i \in I}
  x_i = b$ we get proofs of $M^I_{T}(\xb) = 0$ for every $T \subseteq
  I$ such that $|T| \equiv 1-b\!\!\mod 2$.  But then also of
  $D_c(L_k(\xb)) \cdot M^I_T(\xb) = 0$ for every such $T$. Adding up and
  composing with~\eqref{eq:last1} we get
  $$
  D_c(L_k(\xb)) \cdot \sum_{T \subseteq I} M^I_T(\xb) \geq 0
  $$
  which is precisely $D_c(L_k(\xb)) \geq 0$ because $\sum_{T \subseteq
    I} M^I_T(\xb) = 1$ by~\eqref{eq:fullsum}.

  At this point we proved $D_c(L_0(\xb)) \geq 0$ for every $c \in R_0
  = \{0,\ldots, n \}$. Recall now that $\sum_{j=1}^{|J|} b_j$ is odd,
  say $2q+1$, and at most $n$. In particular $q+1$ belongs to $R_0$
  and $L_0(\xb) = q+\textstyle{\frac{1}{2}}$. Thus we have a proof of
  $D_{q+1}(L_0(\xb)) \geq 0$ where $D_{q+1}(L_0(\xb)) =
  -\textstyle{\frac{1}{2}}\cdot \textstyle{\frac{1}{2}} =
  -\textstyle{\frac{1}{4}}$.  Multiplying by $4$ we get $-1 \geq 0$.
\end{proof}
  
\section{Prime fields}

Let $p$ be a prime. We identify the elements of the field with $p$
elements $\Fbb_p$ with the integers $\{0,\ldots,p-1\}$. For every $i
\in [n]$, let $\xb_i = (x_i(0),\ldots,x_i(p-1))$ be formal variables
ranging over~$\Qbb$, and let $\xb = (\xb_1,\ldots,\xb_n)$. By imposing
the constraints
\[
\begin{array}{ll}
  x_i(0) + \cdots + x_i(p-1) = 1 & \text{for all } i \in [n] \\
  x_i(j) - x_i(j)^2 = 0 & \text{for all } i \in [n] \text{ and } j \in \{0,\ldots,p-1\}
\end{array}
\] 
each $\xb_i$ is the indicator vector of some value in
$\Fbb_p$. Consequently we think of $\xb_i$ as a formal variable
ranging over $\Fbb_p$.  In the following, let $\mathcal{Z}$ be the set
of equations $x_i(0) + \cdots + x_i(p-1) = 1$ as $i$ ranges over
$[n]$.

For every linear equation of the type $\ab^\transpose \xb = b$ where
$\ab = (a_1,\ldots,a_n) \in \Fbb_p^n$ and $b \in \Fbb_p$, let
$\mathcal{S}(\ab,b)$ be the system of linear inequalities
\[
\begin{array}{ll}
  \sum_{i \in I} (1-x_i(z_i)) \geq 1 & \text{ for all } \zb \in \Fbb_p^I \text{ such that } \sum_{i\in I} a_i z_i \not\equiv b\!\!\mod p
\end{array}
\] 
where $I = \mathrm{supp}(\ab) := \{ i \in [n] : a_i \not= 0 \}$.
Observe that these are at most $p^{|I|}$ different inequalities.  For
a system $\Ab \xb = \bb$ of $m$ linear equations as above, let
$\mathcal{S}(\Ab,\bb) := \bigcup_{i=1}^m \mathcal{S}(\ab_i,b_i)$ as
$\ab_i$ ranges over the rows of $\Ab$ and $b_i$ ranges over the
components of $\bb$.

In the following, for $I \subseteq [n]$ and $\zb \in \Fbb_p^I$,
let
\[
M_\zb(\xb) := \prod_{i \in I} \left({ x_i(z_i) \cdot \prod_{\ell=0: \atop
  \ell\not=z_i}^{p-1} (1-x_i(\ell))}\right).
\]

We start with the analogue of~\eqref{eq:fullsum}. This time we need
to assume some axioms.

\begin{lemma} \label{lem:complete} Let $I \subseteq [n]$. The equation
  $\sum_{\zb \in \Fbb_p^I} M_\zb(\xb) = 1$ has a tree-like proof from
  $\mathcal{Z}$ of length polynomial in $|I|$ and $p$, and degree linear
  in $|I|p$.
\end{lemma}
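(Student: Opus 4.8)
The plan is to mimic the telescoping identity $\prod_{i \in I}(x_i + 1 - x_i) = 1$ used in the two-element case, but with each factor now being a sum over $\Fbb_p$ rather than a binomial. Concretely, for each $i \in I$ I would first establish, from the axiom $x_i(0) + \cdots + x_i(p-1) = 1$ in $\mathcal{Z}$, the equation
\[
x_i(z) \cdot \prod_{\ell \neq z}(1 - x_i(\ell)) = x_i(z) \qquad \text{for each } z \in \Fbb_p,
\]
and then sum over $z$ to obtain $\sum_{z \in \Fbb_p} x_i(z)\prod_{\ell \neq z}(1-x_i(\ell)) = 1$. For the displayed single-index identity, expanding $\prod_{\ell \neq z}(1 - x_i(\ell))$ and using $x_i(z) x_i(\ell) = 0$ for $\ell \neq z$ (which itself follows from $x_i(z) + \cdots = 1$ after multiplying by $x_i(z)$ and applying $x_i(z)^2 = x_i(z)$ to kill the square term, leaving $x_i(z)\sum_{\ell \neq z} x_i(\ell) = 0$, then peeling off terms one at a time), all cross terms vanish, and $x_i(z)\cdot 1 = x_i(z)$ survives. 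This is exactly the kind of term-by-term multiplication-and-substitution argument used in Lemma~\ref{lem:GHP2} and Lemma~\ref{lem:thing}, and it costs length polynomial in $p$ and degree linear in $p$ per index $i$.

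The second step is to multiply these $|I|$ single-index equations together. Writing $S_i(\xb) := \sum_{z \in \Fbb_p} x_i(z)\prod_{\ell \neq z}(1-x_i(\ell))$, I have proofs of $S_i(\xb) = 1$ for each $i \in I$, and I want $\prod_{i \in I} S_i(\xb) = 1$, whose left-hand side expands precisely to $\sum_{\zb \in \Fbb_p^I} M_\zb(\xb)$ once the product of sums is distributed. To combine the equations I would iterate Lemma~\ref{lem:subs}: starting from $\prod_{i \in I} S_i = \prod_{i \in I} S_i$, substitute $S_{i_1} = 1$ into the first factor to get $\prod_{i \in I} S_i = \prod_{i \in I \setminus \{i_1\}} S_i$, then substitute $S_{i_2} = 1$, and so on, until the right-hand side is $1$. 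Here the role of the variable $y$ in Lemma~\ref{lem:subs} is played (one at a time) by each $S_{i}$, treated as a linear "variable" that the remaining product is linear in; the degree of $y$ is $1$, so the length cost per substitution is polynomial in the sizes, and iterating $|I|$ times keeps the total length polynomial in $|I|$ and $p$ and the degree linear in $|I| p$.

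The main obstacle I anticipate is bookkeeping rather than conceptual: I need to be careful that the intermediate polynomials $\prod_{i \in J} S_i$ for $J \subseteq I$ do not blow up in degree or size before the substitutions collapse them. Each $S_i$ has degree $p$, so the honest product $\prod_{i \in I} S_i$ has degree $|I|p$, which matches the claimed degree bound, but I must check that Lemma~\ref{lem:subs} is applied with the right "outer" polynomial $R(\xb, y)$ at each stage — namely $R(\xb, y) = y \cdot \prod_{i \in J} S_i(\xb)$ with $y$ standing for the next $S_{i'}$ — so that the degree of $y$ is genuinely $1$ and the size bound from that lemma (product of sizes, since $R$ is linear in $y$) gives a geometric-type estimate that still sums to a polynomial. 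A secondary point to verify is that the very first step, deriving $x_i(z)\prod_{\ell\neq z}(1-x_i(\ell)) = x_i(z)$, really does only use $\mathcal{Z}$ together with the Boolean axioms $x_i(j) - x_i(j)^2 \ge 0$ and $x_i(j)^2 - x_i(j) \ge 0$ (which are always available), and not anything from $\mathcal{S}(\Ab,\bb)$ — so that the lemma's hypothesis "from $\mathcal{Z}$" is honestly met.
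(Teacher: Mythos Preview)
Your proposal is correct and takes essentially the same approach as the paper. The paper writes the argument as the terse chain $1 = \prod_{i\in I}\sum_{\ell} x_i(\ell) = \sum_{\zb}\prod_i x_i(z_i) = \sum_{\zb}\prod_i x_i(z_i)^p = \sum_{\zb} M_\zb(\xb)$ and then invokes Lemma~\ref{lem:subs}; you reorder by first deriving $S_i=1$ per index (via the orthogonality $x_i(z)x_i(\ell)=0$, which is also what underlies the paper's last equality) and then multiplying across $I$, but the identities used and the resulting length and degree bounds are the same.
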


\begin{proof}
  Using $\sum_{\ell = 0}^{p-1} x_i(\ell) = 1$ and $x_i(z_i) -
  x_i(z_i)^2 = 0$ we have
\[
1 = \prod_{i\in I} \sum_{\ell=0}^{p-1} x_i(\ell) = \sum_{\zb \in
  \Fbb_p^k} \prod_{i\in I} x_i(z_i) = \sum_{\zb \in \Fbb_p^k} \prod_{i
  \in I} x_i(z_i)^{p} = \sum_{\zb \in \Fbb_p^k} \prod_{i \in I}
\left({x_i(z_i)\cdot \prod_{\ell=0: \atop \ell\not=z_i}^{p-1}
    (1-x_i(\ell))}\right).
\]
Use Lemma~\ref{lem:subs} to get an actual proof.
\end{proof}

\begin{lemma} \label{lem:fromaxioms} Let $\ab \in \Fbb_p^n$ and $b \in
  \Fbb_p$.  For every $\zb \in \Fbb_p^I$ such that $\sum_{i \in I}
  a_iz_i \not\equiv b\!\!\mod p$, where $I = \mathrm{supp}(\ab)$, the
  equation $M_\zb(\xb) = 0$ has a tree-like proof from
  $\mathcal{S}(\ab,b) \cup \mathcal{Z}$ of length polynomial in $|I|$
  and $p$, and degree at most $|I|p$.
\end{lemma}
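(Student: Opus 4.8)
The plan is to mimic the proof of Lemma \ref{lem:GHP2} as closely as possible. Fix $\zb \in \Fbb_p^I$ with $\sum_{i\in I} a_i z_i \not\equiv b \bmod p$. The goal is $M_\zb(\xb) = 0$, where $M_\zb(\xb) = \prod_{i\in I}\big(x_i(z_i)\cdot\prod_{\ell\ne z_i}(1-x_i(\ell))\big)$. The converse inequality $0 \geq M_\zb(\xb)$ is free: each factor of $M_\zb$ is a product of variables and of terms $1-x_i(\ell)$, so $M_\zb \geq 0$ is obtained by starting from a trivial axiom and applying the lifting rules repeatedly, never touching $\mathcal{S}(\ab,b)$ or $\mathcal{Z}$. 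So the real work is proving $M_\zb(\xb) \geq 0 \to$ actually proving $-M_\zb(\xb) \geq 0$, i.e.\ that $M_\zb$ is at most $0$.

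The key observation is which inequality of $\mathcal{S}(\ab,b)$ to start from. Since $\sum_{i\in I} a_i z_i \not\equiv b \bmod p$, the assignment $\zb$ itself is one of the "bad" assignments, so $\sum_{i\in I}(1-x_i(z_i)) \geq 1$ is an axiom of $\mathcal{S}(\ab,b)$. Now I would process the indices $i \in I$ one at a time, just as in Lemma \ref{lem:GHP2}: multiply the current inequality by $x_{i}(z_{i})$ for the first index, and use the axiom $x_i(z_i) - x_i(z_i)^2 = 0$ (available since $x_i(z_i)^2 - x_i(z_i) \geq 0$ and $x_i(z_i) - x_i(z_i)^2 \geq 0$ are among the standard axioms) to cancel the cross term $(1-x_i(z_i))\cdot x_i(z_i)$ on the left. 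This peels off one summand and leaves $\sum_{i\in I,\, i\ne i_1}(1-x_i(z_i))\cdot x_{i_1}(z_{i_1}) \geq x_{i_1}(z_{i_1})$. After doing this for all $i\in I$ we reach $0 \geq \prod_{i\in I} x_i(z_i)$. The remaining task is to multiply in the factors $1-x_i(\ell)$ for $\ell\ne z_i$ to build up the full extended monomial $M_\zb$; since multiplying an inequality $0\ge P$ by $1-x_i(\ell)$ gives $0 \ge P\cdot(1-x_i(\ell))$ directly via the lifting rule, this is immediate and requires no cancellations. (Alternatively, one can observe $x_i(z_i) = x_i(z_i)^p = x_i(z_i)\cdot\prod_{\ell\ne z_i}(1-x_i(\ell))$ modulo $\mathcal{Z}$, as in Lemma \ref{lem:complete}, and invoke Lemma \ref{lem:subs}, but the direct multiplication is cleaner here since we only need the inequality in one direction.)

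For the bounds: the derivation has one "peeling" step per index of $I$, each step being a constant number of inferences (a multiplication, a multiplication of the axiom $x_i(z_i)-x_i(z_i)^2=0$ by the appropriate monomial, and an addition), plus $|I|(p-1)$ further lifting steps to attach the $(1-x_i(\ell))$ factors, so the length is $O(|I|p)$, polynomial in $|I|$ and $p$. The highest-degree polynomial appearing is $M_\zb$ itself, together with the intermediate products $\big(\sum_{i}(1-x_i(z_i))\big)\cdot(\text{partial product})$; every monomial there has at most one variable per index $i\in I$ from the factors already processed, so the degree never exceeds $|I|p$ — in fact the bound $|I|p$ is generous and the degree is really at most $|I| + |I|(p-1) = |I|p$ once all $(1-x_i(\ell))$ factors are in, which matches the claimed bound. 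The shape is a tree: the single axiom from $\mathcal{S}(\ab,b)$ is the only "used" hypothesis besides the standard axioms $x_i(z_i)-x_i(z_i)^2 = 0$, and each derived inequality feeds into exactly one subsequent inference. I expect no real obstacle here; the only point needing a little care is bookkeeping the degree through the $|I|$ peeling steps and confirming that no monomial ever acquires two distinct variables $x_i(z_i), x_i(\ell)$ with $\ell\ne z_i$ for the same $i$ before the final lifting phase — which is exactly what the axiom-based cancellations ensure.
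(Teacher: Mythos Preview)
Your proposal is correct and follows essentially the same route as the paper: start from the axiom $\sum_{i\in I}(1-x_i(z_i))\geq 1$ in $\mathcal{S}(\ab,b)$, peel off one index at a time by multiplying by $x_i(z_i)$ and cancelling via $x_i(z_i)-x_i(z_i)^2=0$ to reach $0\geq\prod_{i\in I}x_i(z_i)$, then lift by the remaining factors $(1-x_i(\ell))$ to obtain $0\geq M_\zb(\xb)$, with the reverse inequality being immediate. One small slip: in your first paragraph you write ``the converse inequality $0\geq M_\zb(\xb)$ is free'' when you mean $M_\zb(\xb)\geq 0$; the explanation that follows is correct.
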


\begin{proof}
  Without loss of generality, assume $I = \{1,\ldots,k\}$.  Start at
  $\sum_{i=1}^k (1-x_i(z_i)) \geq 1$ from $\mathcal{S}(\ab,b)$,
  multiply by $x_1(z_1)$, and use $x_1(z_1)\cdot (1-x_1(z_1)) =
  x_1(z_1) - x_1(z_1)^2 = 0$ to get $\sum_{i=2}^k (1-x_i(z_i)) \geq
  x_1(z_1)$. Repeat with $x_2(z_2),\ldots,x_k(z_k)$ to get $0 \geq
  \prod_{i=1}^k x_i(z_i)$. Multiply by $(1-x_i(\ell))$ for every $i
  \in I$ and $\ell \in \{0,\ldots,p-1\}\setminus\{z_i\}$ to get $0
  \geq M_\zb(\xb)$. The reverse inequality has a direct proof not even
  using any of the axioms from $\mathcal{S}(\ab,b) \cup \mathcal{Z}$.
\end{proof}

The following is the analogue of Lemma~\ref{lem:thing}:

\begin{lemma} \label{lem:identity}
Let $I \subseteq [n]$, $\ab \in \Fbb_p^I$, and $\zb \in \Fbb_p^I$.
Then the equation
\[
\left({\sum_{i\in I} a_i \sum_{\ell=0}^{p-1} \ell x_i(\ell) -
    \sum_{i \in I} a_i z_i}\right) \cdot M_\zb(\xb) = 0
\]
has a tree-like proof of length polynomial in $|I|$ and $p$, and
degree at most $|I|p+1$.
\end{lemma}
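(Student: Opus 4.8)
The plan is to mirror the proof of Lemma~\ref{lem:thing}, with the single Boolean axiom $x_i^2 = x_i$ used there replaced by the two kinds of axioms available here: $x_i(z_i)^2 - x_i(z_i) = 0$ for the ``active'' coordinate $z_i$ and $x_i(\ell) - x_i(\ell)^2 = 0$ for $\ell \neq z_i$.

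First I would fix $i \in I$ and write $M$ for $M_\zb(\xb)$. For each $\ell \in \{0,\ldots,p-1\}$ with $\ell \neq z_i$, the definition of $M$ exhibits $1 - x_i(\ell)$ as one of its factors; writing $M = (1 - x_i(\ell))\,N$ for the product $N$ of the remaining factors, we have the polynomial identity $x_i(\ell)\cdot M = (x_i(\ell) - x_i(\ell)^2)\cdot N$. Every factor of $N$ is one of the formal variables $x_{i'}(\ell')$ or one of the expressions $1 - x_{i'}(\ell')$, so multiplying the axiom $x_i(\ell) - x_i(\ell)^2 \geq 0$ by these factors one at a time via the multiplication rules, and doing the same with $x_i(\ell)^2 - x_i(\ell) \geq 0$, yields a tree-like proof of $x_i(\ell)\cdot M = 0$. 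For $\ell = z_i$ I would instead use that $x_i(z_i)$ is a factor of $M$: writing $M = x_i(z_i)\,N'$ we get the identity $x_i(z_i)\cdot M - M = (x_i(z_i)^2 - x_i(z_i))\,N'$, and the same derivation starting from $x_i(z_i)^2 - x_i(z_i) \geq 0$ and its negation proves $x_i(z_i)\cdot M = M$. Next, still for fixed $i$, I would multiply the equation $x_i(\ell)\cdot M = 0$ by the nonnegative integer $\ell$ for each $\ell \neq z_i$, multiply $x_i(z_i)\cdot M = M$ by $z_i$, and add the $p$ resulting equations: the terms with $\ell \neq z_i$ vanish and the remaining one contributes $z_i\cdot M$, so this proves $\sum_{\ell=0}^{p-1}\ell\,x_i(\ell)\cdot M = z_i\cdot M$. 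Finally, multiplying this by the nonnegative integer $a_i$ and summing over $i \in I$ gives, after rearranging terms, exactly $\big(\sum_{i\in I} a_i \sum_{\ell=0}^{p-1}\ell\,x_i(\ell) - \sum_{i\in I} a_i z_i\big)\cdot M = 0$.

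For the bounds, note that $\deg M_\zb(\xb) = p|I|$, and the only polynomials of larger degree that occur are the intermediate $x_i(\ell)\cdot M$ and $x_i(z_i)^2 N'$, both of degree $p|I|+1$, so the degree is at most $p|I|+1$. The construction uses, for each of the $|I|$ choices of $i$ and $p$ choices of $\ell$, two chains of at most $p|I|$ multiplication steps, plus $O(p|I|)$ positive-linear-combination steps at the end, so the length is polynomial in $|I|$ and $p$; and since the two chains producing $x_i(\ell)\cdot M \geq 0$ and $-x_i(\ell)\cdot M \geq 0$ are disjoint and each starts at an axiom, no derived inequality is reused and the proof is tree-like.

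I do not expect a genuine obstacle here; the one point that needs care is that ``multiplying an equation by the monomial $N$'' is not a single rule application but must be decomposed into $\deg N$ individual uses of the multiplication rule, one per variable factor — it is precisely this bookkeeping that keeps the degree at $p|I|+1$ and the length polynomial rather than letting either quantity blow up.
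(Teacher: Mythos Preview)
Your proof is correct and follows essentially the same argument as the paper's: derive $x_i(\ell)\cdot M = 0$ for $\ell \neq z_i$ from the Boolean axiom and the fact that $1 - x_i(\ell)$ divides $M$, derive $x_i(z_i)\cdot M = M$ from the fact that $x_i(z_i)$ divides $M$, then take the appropriate linear combination. The paper's write-up is terser (it folds the multiplication by $a_i\ell$ into a single step) but the content is identical, and your explicit bookkeeping of the multiplication chains and the degree-$|I|p+1$ bound is exactly the care the paper leaves implicit.
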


\begin{proof}
  Write $M$ for $M_\zb(\xb)$. For every $i \in I$ and every
  $\ell\in\{0,\ldots,p-1\}\setminus\{z_i\}$, using $x_i(\ell) \cdot
  (1-x_i(\ell)) = x_i(\ell) - x_i(\ell)^2 = 0$ we get $a_i \ell
  x_i(\ell) \cdot M = 0$. For every $i \in I$, using $x_i(z_i)^2 -
  x_i(z_i) = 0$ we get $a_i z_i x_i(z_i) \cdot M = a_i z_i \cdot
  M$. Adding up we get what we want.
\end{proof}

\begin{theorem} \label{thm:mainprimefield}
  Let $\Ab \in \Fbb_p^{m \times n}$ and $\bb \in \Fbb_p^m$.  If $\Ab
  \xb = \bb$ is unsatisfiable in $\Fbb_p$, then $\mathcal{S}(\Ab,\bb)
  \cup \mathcal{Z}$ has a (not necessarily tree-like) refutation of
  size polynomial in $n$ and $p^w$, and degree linear in $w$, where
  $w$ is the maximum number of non-zero components of the rows of
  $\Ab$.
\end{theorem}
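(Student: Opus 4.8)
The plan is to mirror the proof of Theorem~\ref{thm:maintwoelementfield}, replacing the $\Fbb_2$-specific machinery (extended monomials indexed by subsets $T\subseteq I$, the quantity $|T|$, parity conditions) by the $\Fbb_p$-machinery already set up in this section: extended monomials $M_\zb(\xb)$ indexed by $\zb\in\Fbb_p^I$, the ``arithmetization'' $\sum_{\ell=0}^{p-1}\ell\,x_i(\ell)$ of the value of $\xb_i$, and the congruence condition $\sum_{i\in I}a_iz_i\equiv b\pmod p$. Since $\Ab\xb=\bb$ is unsatisfiable over $\Fbb_p$, linear algebra over $\Fbb_p$ gives a set of rows $J$ with $|J|\le n$ and coefficients $\lambda_j\in\Fbb_p$ such that $\sum_{j\in J}\lambda_j\ab_j=\zerob$ in $\Fbb_p^n$ but $\sum_{j\in J}\lambda_j b_j\not\equiv 0\pmod p$; absorbing $\lambda_j$ into the rows and reindexing, assume $J=\{1,\ldots,|J|\}$ and $\sum_{j=1}^{|J|}a_{j,i}\equiv 0\pmod p$ for each $i\in[n]$. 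I would define, for $0\le k\le |J|$,
\[
L_k(\xb) := \frac{1}{p}\left({\sum_{j=1}^{k}\sum_{i=1}^n a_{j,i}\sum_{\ell=0}^{p-1}\ell\,x_i(\ell) + \sum_{j=k+1}^{|J|} b_j}\right),
\]
and prove $D_c(L_k(\xb))\ge 0$ for all $c$ in a range $R_k$ (of size polynomial in $n$, $p$, and $k$, say $R_k=\{0,\ldots,(k+1)np\}$ to be safe) by reverse induction on $k$.

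The base case $k=|J|$ uses Lemma~\ref{lem:GHP}: because $\sum_{j=1}^{|J|}a_{j,i}$ is divisible by $p$ for each $i$, when we clear the $\frac1p$ the coefficient of each $x_i(\ell)$ in $p\,L_{|J|}(\xb)=\sum_{i}\big(\frac1p\sum_j a_{j,i}\big)\sum_\ell \ell\,x_i(\ell)$ is an integer, so $L_{|J|}(\xb)$ is an integer linear form and $D_c(L_{|J|}(\xb))\ge 0$ has the required short low-degree tree-like proof. For the inductive step, fix $0\le k\le |J|-1$ and $c\in R_k$; write $L=L_{k+1}$ and write the $(k+1)$-st equation as $\sum_{i\in I}a_i x_i = b$ with $I=\mathrm{supp}(\ab_{k+1})$. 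Then $L(\xb)=L_k(\xb)+\frac1p\,\ell(\xb)$ where $\ell(\xb):=-b+\sum_{i\in I}a_i\sum_{\ell=0}^{p-1}\ell\,x_i(\ell)$. For each $\zb\in\Fbb_p^I$ with $\sum_{i\in I}a_iz_i\equiv b\pmod p$, set $s:=\sum_{i\in I}a_iz_i$ (as an integer with $b\le s$ and $s-b$ divisible by $p$, after choosing the integer representative appropriately, or more carefully $d:=c+\frac{s-b}{p}$ where we work with integer representatives so $s\equiv b\pmod p$ forces $s-b$ divisible by $p$), and check $d\in R_{k+1}$. Multiplying $D_d(L(\xb))\ge 0$ by $M_\zb(\xb)$, substituting $L=L_k+\frac1p\ell$ in one factor and using $d=c+\frac{s-b}{p}$, the cross term becomes $(L(\xb)-d+1)\cdot\frac1p\cdot A_\zb(\xb)$ with $A_\zb(\xb):=(\ell(\xb)+b-s)\cdot M_\zb(\xb)$; by Lemma~\ref{lem:identity} (with $a_i z_i$ summed to give $s$) we have a proof of $A_\zb(\xb)=0$, so it can be dropped, leaving $(L_k(\xb)-c)(L(\xb)-d+1)M_\zb(\xb)\ge 0$. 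Repeating the same manipulation on the second factor yields $D_c(L_k(\xb))\cdot M_\zb(\xb)\ge 0$. Summing over all $\zb\in\Fbb_p^I$ with $\sum a_iz_i\equiv b$, then using Lemma~\ref{lem:fromaxioms} to get $M_\zb(\xb)=0$ (hence $D_c(L_k(\xb))\cdot M_\zb(\xb)=0$) for all $\zb$ with $\sum a_iz_i\not\equiv b$, and finally Lemma~\ref{lem:complete} to get $\sum_{\zb\in\Fbb_p^I}M_\zb(\xb)=1$, we conclude $D_c(L_k(\xb))\ge 0$.

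Running the induction down to $k=0$ gives $D_c(L_0(\xb))\ge 0$ for all $c\in R_0$, where $L_0(\xb)=\frac1p\sum_{j=1}^{|J|}b_j$ is a rational constant equal to $\beta/p$ with $\beta:=\sum_j b_j$ not divisible by $p$ (this is exactly the condition $\sum_j\lambda_j b_j\not\equiv 0$). Writing $\beta=qp+r$ with $1\le r\le p-1$, we have $L_0(\xb)=q+\frac{r}{p}$, which lies strictly between the consecutive integers $q$ and $q+1$; choosing $c=q+1\in R_0$, the polynomial $D_c(L_0(\xb))=(\frac{r}{p}-1)\cdot\frac{r}{p}$ is a negative rational constant, so multiplying the derived inequality by a suitable positive rational yields $-1\ge 0$. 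For the complexity bookkeeping, each of the $|J|\le n$ inductive stages does $O(p^{|I|})\le O(p^w)$ multiplications by extended monomials plus invocations of Lemmas~\ref{lem:complete}, \ref{lem:fromaxioms}, \ref{lem:identity}, each of length polynomial in $w$ and $p$ and degree $O(wp)$; every polynomial appearing has degree $O(wp)$, so both size and degree come out as claimed (degree linear in $w$ once $p$ is regarded as part of the constant, or more precisely $O(wp)$, which is linear in $w$).

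The main obstacle I anticipate is purely arithmetic housekeeping rather than a genuine conceptual difficulty: keeping track of integer representatives versus residues mod $p$ when defining $s=\sum_{i\in I}a_iz_i$ and $d=c+\frac{s-b}{p}$, so that (i) $s-b$ is genuinely a multiple of $p$, (ii) $d$ is a nonnegative integer, and (iii) $d$ stays within the range $R_{k+1}$ for the inductive hypothesis to apply. This forces a slightly generous choice of the ranges $R_k$ — something like $R_k=\{0,\ldots,(k+1)np\}$ — and a short verification that $c\in R_k$ together with $|s|\le (p-1)\cdot(p-1)\cdot w$ or so implies $d\in R_{k+1}$; I would state this inequality explicitly and leave the one-line check to the reader. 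Everything else is a faithful transcription of the $\Fbb_2$ argument with $|T|\rightsquigarrow\sum_i a_iz_i$, $2\rightsquigarrow p$, parity $\rightsquigarrow$ congruence mod $p$, and the lemmas of Section~4 in place of their Section~3 counterparts.
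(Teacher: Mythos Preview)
Your proposal is correct and follows essentially the same route as the paper's proof. The only cosmetic difference is that the paper keeps the multipliers explicit throughout (writing $L_k$ with factors $y_j$, setting $d=c+\frac{(t-b)y}{p}$, and using the slightly larger range $R_k=\{0,\ldots,(k+1)p^2n\}$), whereas you absorb the $\lambda_j$ into the rows; your shortcut is valid but tacitly relies on the fact that for $\lambda\not\equiv 0$ the axiom sets $\mathcal{S}(\lambda\ab,\lambda b)$ and $\mathcal{S}(\ab,b)$ are literally the same family of inequalities (same support, same congruence condition on $\zb$), so Lemma~\ref{lem:fromaxioms} still applies with the axioms you actually have.
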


\begin{proof}
  Let $\ab_1,\ldots,\ab_m \in \Fbb_p^n$ be the rows of $\Ab$. Assume
  $\Ab \xb = \bb$ is unsatisfiable in $\Fbb_p$.  Then the $\Fbb_p$-rank
  of the matrix $[\Ab \;|\; \bb]$ is bigger than the rank of $\Ab$,
  which means that there exists a subset of rows $J \subseteq [m]$ and
  a vector of multipliers $\yb = (y_j : j \in J) \in \Fbb_p^J$ such
  that $|J| \leq n$ and $\sum_{j \in J} y_j \ab_j = \zerob$ and
  $\sum_{j \in J} y_j b_j = 1$ with arithmetic in $\Fbb_p$. In order
  to simplify notation, we assume without loss of generality that $J =
  \{1,\ldots,|J|\}$.

  For every $k \in \{0,\ldots,|J|\}$, let
  $$
  L_k(\xb) := \frac{1}{p} \left({ \sum_{j=1}^k y_j\sum_{i=1}^n a_{j,i} X_i + \sum_{j=k+1}^{|J|} y_j b_j}\right),
  $$
  where $X_i := \sum_{\ell = 0}^{p-1} \ell \cdot x_i(\ell)$.  We
  provide proofs of $D_c(L_k(\xb)) \geq 0$ for every $c \in R_k :=
  \{0,\ldots,(k+1)p^2 n \}$ by reverse induction on $k \in
  \{0,\ldots,|J|\}$.

  The base case $k = |J|$ is a special case of Lemma~\ref{lem:GHP}. To
  see why note that the condition $\sum_{j \in J} y_j \ab_j = \zerob$
  means that if arithmetic is in $\Qbb$ then $\sum_{j \in J} y_j
  a_{j,i}$ is an integer multiple of $p$ for every $i \in [n]$. But then all
  the coefficients of
  $$
  L_{|J|}(\xb) = \frac{1}{p} \sum_{j=1}^{|J|} y_j \sum_{i=1}^n a_{j,i}
  X_i = \sum_{i=1}^n \sum_{\ell=0}^{p-1} \left({\frac{1}{p}
      \sum_{j=1}^{|J|} y_j a_{j,i}}\right) x_i(\ell)
  $$ 
  are integers. Hence Lemma~\ref{lem:GHP} applies.

  Suppose now that $0 \leq k \leq |J|-1$ and that we have a proof of
  $D_d(L_{k+1}(\xb)) \geq 0$ available for every $d \in R_{k+1}$.  Fix
  $c \in R_k$; our immediate goal is to give a proof of $D_c(L_k(\xb))
  \geq 0$.  As $k$ is fixed, write $L(\xb)$ instead of $L_{k+1}(\xb)$,
  and also $y$ instead of $y_{k+1}$,
  and let the equation $\ab_{k+1}^\transpose \xb = b_{k+1}$ be written
  as $\sum_{i \in I} a_i \xb_i = b$, where $I =
  \mathrm{supp}(\ab)$. Note that $L(\xb) = L_k(\xb) +
  \frac{y}{p} \cdot \ell(\xb)$ where $\ell(\xb) := -b + \sum_{i\in I} a_i
  X_i$.   

  Split $\Fbb_p^I$ into $Z := \{\zb \in \Fbb_p^I : \sum_{i\in I} a_i
  z_i \equiv b\!\!\mod p\}$ and $\overline{Z} := \Fbb_p^I \setminus
  Z$. Fix $\zb \in Z$ and let $t := \sum_{i\in I} a_i z_i$ with
  arithmetic in $\Qbb$. Let $d = c +\frac{(t-b)y}{p}$ and note that $d
  \in R_{k+1}$ as $c \in R_k$, $0 \leq t \leq p^2 n$, $0 \leq y \leq
  p-1$, and $0\leq b \leq p-1$ are such that $t-b$ is an integer
  multiple of $p$. Multiplying $D_{d}(L(\xb)) \geq 0$ by $M_\zb(\xb)$
  we get
  \begin{equation}
    (L(\xb) - d) \cdot (L(\xb) - d + 1) \cdot M_\zb(\xb) \geq 0.
  \end{equation}
  Replacing $L(\xb) = L_k(\xb) + \frac{y}{p} \cdot \ell(\xb)$ in the
  factor $(L(\xb) - d)$ and recalling $d = c +\frac{(t-b)y}{p}$, this
  inequality can be written as
  \begin{equation}
    (L_k(\xb) - c) \cdot (L(\xb) - d + 1) \cdot M_\zb(\xb)
    + (L(\xb) - d + 1) \cdot \textstyle{\frac{y}{p}} \cdot A(\xb) \geq 0
  \label{eq:inter}
  \end{equation}
  where $A(\xb) := (\ell(\xb) + b - t) \cdot M_\zb(\xb)$. By
  Lemma~\ref{lem:identity} we have a proof of $A(\xb) = 0$, and hence
  of $(L(\xb)-d+1) \cdot \frac{y}{p} \cdot A(\xb) = 0$. Composing
  with~\eqref{eq:inter} we get a proof of
  \begin{equation}
    (L_k(\xb) - c) \cdot (L(\xb) - d + 1) \cdot M_\zb(\xb) \geq 0.
  \end{equation}
  The same argument applied to the factor $(L(\xb)  - d + 1)$ of
  this inequality gives
  \begin{equation}
  (L_k(\xb) - c) \cdot (L_k(\xb) - c + 1) \cdot M_\zb(\xb) \geq 0.
  \end{equation}
  This is precisely $D_c(L_k(\xb)) \cdot M_\zb(\xb) \geq 0$. Adding
  over $Z$ we get
  \begin{equation}
  D_c(L_k(\xb)) \cdot \sum_{\zb \in Z}
  M_\zb(\xb) \geq 0. \label{eq:last}
  \end{equation}
  By Lemma~\ref{lem:fromaxioms}, from the inequalities in
  $\mathcal{S}(\ab_{k+1},b_{k+1})$ we get proofs of $M_\zb(\xb) = 0$
  for every $\zb \in \overline{Z}$. But then also $D_c(L_k(\xb)) \cdot
  M_\zb(\xb) = 0$ for every such $\zb$. Adding up and composing with
  \eqref{eq:last} we get
  \begin{equation*}
  D_c(L_k(\xb)) \cdot \sum_{\zb \in Z \cup \overline{Z}} M_\zb(\xb) \geq 0
  \end{equation*}
  which is precisely $D_c(L_k(\xb)) \geq 0$ because
  $\sum_{\zb\in\Fbb_p^I} M_\zb(\xb) = 1$ by Lemma~\ref{lem:complete}.

  At this point we proved $D_c(L_0(\xb)) \geq 0$ for every $c \in R_0
  = \{0,\ldots,p^2 n\}$. Recall now that $\sum_{j=1}^{|J|} y_j b_j$ is
  congruent to $1$ mod $p$, say $pq+1$, and smaller than $p^2 n$. In
  particular $q+1$ belongs to $R_0$ and $L_0(\xb) =
  q+\textstyle{\frac{1}{p}}$. Thus we have a proof of
  $D_{q+1}(L_0(\xb)) \geq 0$ where $D_{q+1}(L_0(\xb)) =
  (\textstyle{\frac{1}{p}}-1)\cdot \textstyle{\frac{1}{p}} =
  \textstyle{\frac{1-p}{p^2}}$.  Multiplying by
  $\textstyle{\frac{p^2}{p-1}} > 0$ we get $-1 \geq 0$.
\end{proof}

\section{Closing remarks}

The upper bound in Theorem~\ref{thm:maintwoelementfield} is to be put
in contrast with the lower bounds proved by Grigoriev
\cite{Grigoriev2001} as rediscovered by Schoenebeck
\cite{Schoenebeck2008}. Those lower bounds hold for \emph{static}
semi-algebraic proofs, and even static \emph{sums-of-squares} (SOS)
proofs.  In short, the static version of semi-algebraic proofs can be
formulated as the restriction to proofs in which all applications of
the multiplication rules must precede all applications of the positive
linear combination rule. Static sums-of-squares proofs would be the
same with the addition of axioms of the form $\sum_{i=1}^m P_i(\xb)^2
\geq 0$ for arbitrary polynomials $P_1,\ldots,P_m$. See
\cite{BarakBrandaoHarrowKelnerSteurerZhou2012} and subsequent work for
some recent exciting applications of static sums-of-squares proofs to
combinatorial optimization.

The above-mentioned lower bounds show that there exist explicit
systems of linear equations $\Ab \xb = \bb$ with $n$ variables and
three variables per equation, that are unsatisfiable over the
two-element field but for which any static semi-algebraic or
sums-of-squares refutation must have degree $\Omega(n)$. This holds
with respect to the same representation of linear systems that we use
here. It can also be seen that their proof also yields an exponential
$2^{\Omega(n)}$ lower bound in size and length.  More strongly, from
the size-degree trade-off results in \cite{PitassiSegerlind2012} for
tree-like proofs, such lower bounds on degree and size apply also to
the tree-like restrictions of semi-algebraic proofs and
sums-of-squares proofs. We note that static proofs may be assumed
tree-like without any significant loss in degree, size or length, so
this is a strengthening. Theorem~\ref{thm:maintwoelementfield} shows
that such lower bounds do not extend to general, i.e.~dag-like,
semi-algebraic proofs.

\end{document}